\documentclass[submission,copyright]{eptcs}

\hyphenation{Open-AI}

\usepackage[utf8]{inputenc}
\usepackage{underscore}

\usepackage{iftex}

\ifpdf
  \usepackage{underscore}         
  \usepackage[T1]{fontenc}        
\else
  \usepackage{breakurl}           
\fi

\newcommand{\eps}{\ensuremath{\varepsilon}}

\newcommand{\sig}{\ensuremath{\Sigma}}

\newcommand{\changeB}[1]{#1}
\newcommand{\changeM}[1]{#1}

\makeatletter
\long\def\ifnodedefined#1#2#3{%
    \@ifundefined{pgf@sh@ns@#1}{#3}{#2}%
}
\makeatother

\usepackage{amsmath}
\usepackage{amssymb}
\usepackage{slashed}
\usepackage{wrapfig}
\usepackage{MnSymbol} 
\usepackage{stmaryrd}
\usepackage{mathtools}
\usepackage{moreverb}
\usepackage{amsthm}

\usepackage{tikz}
\usepackage{tikzpagenodes}
\usetikzlibrary{positioning,automata,shapes,calc,patterns,patterns.meta,decorations.pathreplacing,arrows,arrows.meta}

\tikzset{>=latex'}

\renewcommand{\emptyset}{\varnothing}

\pgfdeclarelayer{background}
\pgfdeclarelayer{doublebackground}
\pgfdeclarelayer{triplebackground}
\pgfsetlayers{triplebackground,doublebackground,background,main}

\makeatletter
\newcommand\footnoteref[1]{\protected@xdef\@thefnmark{\ref{#1}}\@footnotemark}
\makeatother

\newcommand{\tok}{\ensuremath{\mathord{\,\raisebox{0.5pt}{\ensuremath{\wr}}\,}}}
\newcommand{\Tk}{\ensuremath{\mathbb{T}}}
\newcommand{\TkEmpty}{\ensuremath{\Tk^{\emptyset}}}
\newcommand{\eqed}{\hfill\ensuremath{\diamond}}

\DeclareMathOperator{\vocab}{vocab}
\DeclareMathOperator{\maxvocab}{maxvocab}
\DeclareMathOperator{\maxtok}{maxtok}

\newtheorem{theorem}{Theorem}
\newtheorem{lemma}{Lemma}
\newtheorem{corollary}{Corollary}

\theoremstyle{definition}
\newtheorem{definition}{Definition}

\newtheorem{algorithm}{Algorithm}
\theoremstyle{remark}
\newtheorem{example}{Example}
\newtheorem{remark}{Remark}

\title{Formalizing BPE Tokenization}
\author{Martin Berglund
  \institute{Department of Computing Science\\Umeå University\\Umeå, Sweden}
  \email{mbe@cs.umu.se}
\and
  Brink van der Merwe
  \institute{Department of Computer Science\\Stellenbosch University\\ Stellenbosch, South Africa}
  \email{abvdm@cs.sun.ac.za}
}

\newcommand{\tokenbox}[1]{%
  \tikz[baseline] {%
    \node[outer sep=0,inner sep=0,anchor=base west] (text) {\hskip.7pt #1\hskip1.5pt\vphantom{$\hat{I}$}};%
    \begin{pgfonlayer}{background}%
      \fill[white] (-.1em,-.2em) rectangle (text.north east);%
      \draw[overlay,very thick,red!20!white] (-.1em,-.2em) rectangle (text.north east);%
    \end{pgfonlayer}
  }%
}

\begin{document}

\maketitle

\begin{abstract}
  In  this paper, we formalize practical byte pair encoding tokenization as it is used in large language models and other NLP systems, in particular we formally define and investigate the semantics of the SentencePiece and HuggingFace tokenizers, in particular how they relate to each other, depending on how the tokenization rules are constructed. Beyond this we consider how tokenization can be performed in an incremental fashion, as well as doing it left-to-right using an amount of memory constant in the length of the string, enabling e.g.\ using a finite state string-to-string transducer.
\end{abstract}

\section{Introduction}

Many modern NLP systems, for example large language models such as the GPT models which underpin services like ChatGPT~\cite{chatgpt}, operate on a \emph{tokenization} of text. This tokenization defines an alphabet of symbols (in the formal languages sense) which include as many common words and fragments of words as possible. For example, the OpenAI GPT-2 model has an alphabet size (a ``vocabulary'' in their terminology) of~50,257 tokens, which is enough to turn the sentence ``taking a ride on a boat'' into\footnote{We are for the purposes of this example ignoring some details, involving whitespace and the string beginning and end.} the token sequence ``\tokenbox{taking} \tokenbox{a} \tokenbox{ride} \tokenbox{on} \tokenbox{a} \tokenbox{boat}'', as all words are common enough to be in the alphabet, but ``partaking in a nautical excursion aboard a vessel'' is tokenized as ``\tokenbox{part}\tokenbox{aking} \tokenbox{in} \tokenbox{a} \tokenbox{n}\tokenbox{autical} \tokenbox{exc}\tokenbox{ursion} \tokenbox{ab}\tokenbox{oard} \tokenbox{a} \tokenbox{ves}\tokenbox{sel}'', as the words are uncommon, but fragments of the words are common enough. This alphabet is then more semantically rich. For example, forming unusual plurals (turning ``\tokenbox{earths}'' into ``\tokenbox{earth}\tokenbox{s}''), or making a noun into a ``non-existing'' verb (i.e.\ turning ``\tokenbox{verb}'' into ``\tokenbox{verb}\tokenbox{ing}''), retaining the informative root word, but also it generally makes the model more robust to misspellings and other minor transformations of the text~\cite{bpe}.

One common way of performing this tokenization is by byte pair encoding~\cite{bpe} (BPE), used by OpenAI GPT models, and e.g.\ the recent Swedish GPT-SW3 model~\cite{gpt-sw3-tokenizer}, which uses the Google tokenizer implementation SentencePiece~\cite{sentencepiece}. BPE operates similar to a compression technique, with a dictionary of token merges constructed greedily maximizing the number of tokens that get merged in a training set (see Remark~\ref{remark:bpe-construction} for a sketch of the procedure).

\changeM{
  There are other methods for performing this type of tokenization, e.g.\ the unigram language model also implemented in SentencePiece~\cite{sentencepiece}, where tokens are individually weighted. We do not consider this case here. BPE tokenization can be contrasted to lexical analysis~\cite{dragon-book}, as lexical analysis (as exhibited in e.g.\ the POSIX tool \texttt{lex}) differs in that the rules are typically authored by hand, and break the string into an infinite language of tokens divided among a constant set of categories. Consider for example extracting arbitrarily long identifiers and string constants when parsing programming languages. BPE tokenization can also be viewed as one case of text segmentation in natural language processing (see e.g.~\cite{text-tiling}), which covers e.g.\ breaking a text into topics, sentences, or words. These text segmentation algorithms have commonly been at least partially supervised or authored, where the tokenization algorithms considered here are designed for language-agnostic unsupervised learning.
}

The way the tokenization procedure is defined and implemented in common \changeB{tools~\cite{huggingface-gpt-2-py,sentencepiece}} is essentially \emph{global:} the highest priority rule that can be applied to the text is, no matter where in the text this application would happen. This is not \emph{usually} a problem, as the text is \emph{pretokenized} by splitting it on whitespace before applying the main tokenization procedure. That is sufficient to make tokenization algorithm behavior irrelevant in general, however:
\begin{itemize}
\item Some natural languages simply do not use interword whitespace, such as writing systems for Thai, Chinese and Japanese.
\item In many artificial (e.g.\ programming) languages whitespace is common but not required, for example ``minified'' code is very common~\cite{minification}, where all unnecessary whitespace is removed to reduce file sizes. Language models are commonly trained at least partially on code, e.g.\ the GPT-SW3 dataset contains 9.5\% code in various programming languages~\cite{gpt-sw3}.
\item In some cases there is whitespace, but pretokenizing using it does not produce the best tokens. E.g.\ in the SQL query language ``LEFT OUTER JOIN'' is a single concept, and would ideally be a single token.
\item Even in cases where the pretokenization normally works well, such as for English, a hypothetical system relying heavily (more so than any currently existing system) on the pretokenization creating short text fragments may then be vulnerable to a denial of service attack. Compare to e.g.~\cite{redos} on such attacks for regular expression matchers.
\end{itemize}

As such, as these systems find their way into broader and more complex use, it becomes interesting to investigate robust algorithms for operating on tokenizations. We investigate the following questions.
\begin{itemize}
  \item Can we perform a BPE tokenization online: observing a stream of text can we output a stream of tokens using only limited memory and computation?
  \item If we have a tokenization of a large text, and the text is modified in a localized way, can we compute a localized update of the tokenization?
\end{itemize}

The answers to all of these questions are interconnected, but first we need firm definitions of the semantics we are considering.
The outline of this paper is as follows. After introducing our notation, we define (formally) SentencePiece and HuggingFace tokenizers. Then we consider how to do tokenization in a streaming fashion. This is followed by a short section outlining our envisioned future work.

\section{Notation}

An alphabet $\sig$ is a finite set of symbols. Let $\sig^*$ denote all strings over the alphabet $\sig$, including the empty string $\eps$, and $\sig^+=\sig^*\setminus \{\eps\}$. A sequence of non-empty strings is a \emph{tokenization}, e.g.\ for $u_1,\ldots,u_n\in \sig^+$, we denote this $u_1 \tok \cdots \tok u_n$. By $\sig^{\tok}$ we denote the set of all tokenizations constructed from strings from $\sig^*$. We refer to those strings as the \emph{tokens}. 
Let $\pi : \sig^{\tok} \to \sig^*$ be the concatenation of the strings in a tokenization, e.g.\ $\pi(u_1\tok \cdots \tok u_n)=u_1\cdots u_n\in \sig^*$. As a special case, we let $\pi$ applied to the tokenization with $n=0$, be the empty string. 
When $\pi(u_1\tok \cdots \tok u_n)=w\in\Sigma^*$, we say that \emph{$u_1\tok \cdots \tok u_n$ is a tokenization of $w$}.
For $\tau = u_1 \tok \cdots \tok u_n$, we denote by $|\tau|$ the integer $n$. Thus, $|\tau|=0$ if and only if $\tau$ is the empty tokenization.
Also, for $u\in\Sigma^*$, we let $|u|$ denote the length of $u$, i.e.\ the number of symbols from $\Sigma$ in the string $u$. It will be clear from the context and notation when $\tau$ denotes a tokenization with $|\tau|=1$, i.e.\ a tokenization of length one instead of a string of length one, given that in this case, $\tau$ could be interpreted as either. In fact, given our notational conventions discussed below, the symbol $\tau$ will always represent a tokenization.

In addition to using $|\tau|$ and $|u|$ for the length of a tokenization $\tau$ and length of a string $u$ respectively, we use $|S|$ to denote the cardinality of a (finite) set $S$. 

Differentiating between strings and tokenizations becomes important as we continue, so we adopt some conventions. Let $\sig$ denote the alphabet whenever not otherwise specified. When giving examples, we always use $\sig=\{a,b,c,\ldots\}$. Furthermore, we always let $\alpha,\beta,\gamma$ be variables denoting symbols from the alphabet, $u,v,w$ denote strings, and $\tau,\phi$ denote tokenizations, including all sub-/superscripted variants of each. In other words, $u,v,w\in \sig^*$ and $\tau,\phi,\psi\in \sig^{\tok}$, and for that matter $u_1,u_2\in \sig^*$, $\tau' \in \sig^{\tok}$, etc. As such we may write e.g.\ $u \tok \tau \tok v = \phi$ to mean that $\tau$ is a tokenization where the first token is $u$, the last is $v$, and the intervening tokens form the tokenization $\tau$, so $|\phi|=|\tau|+2$.

\section{Tokenizing Semantics}

First, we define a byte pair dictionary, which will be used to restrict the set of possible tokenizations for a given string $w$. We will only use $D$ and its sub-/superscripted variants to denote a dictionary.

\begin{definition}
    A byte pair \emph{dictionary $D=[u_1\tok v_1,\ldots,u_n\tok v_n]$} of length $|D|=n$ is a sequence of tokenizations $u_1\tok v_1,\ldots,u_n\tok v_n$, with each tokenization $u_i\tok v_i$ being of length 2. We call each $u_i \tok v_i$ a \emph{rule} and say that (a rule) $u_i \tok v_i$ has \emph{higher priority} than $u_j \tok v_j$, when $i<j$.
\end{definition}

We write $\tau \Rightarrow^D \tau'$ if $\tau=\phi \tok u \tok v \tok \phi'$ and $\tau'=\phi \tok uv \tok \phi'$, for some $u\tok v \in D$.
The dictionary $D$ will always be clear from the context, thus we omit the superscript on $\Rightarrow$. We let $\Rightarrow^+$ and $\Rightarrow^*$ denote the transitive and reflexive transitive closure of $\Rightarrow$, respectively. Also, for $w=\alpha_1\cdots\alpha_n$, we denote by $\TkEmpty(w)$ the tokenization $\alpha_1\tok\cdots\tok \alpha_n$. 

Next, we transfer terminology used in derivations over context-free grammars, to our setting. For $w=\alpha_1\ldots\alpha_n$, we begin a derivation for a tokenization with $\TkEmpty(w)$, although a more complicated pretokenizer step could certainly also be of interest, but not considered in this paper. Whereas in the case of context-free grammars, a derivation step consists of applying a grammar rule by replacing the non-terminal on the left-hand side of a rule, by its right-hand side, in our setting, a derivation step is of the form $\phi\tok u_i\tok v_i\tok\phi'\Rightarrow \phi\tok u_iv_i\tok\phi'$, for $u_i\tok v_i$ in $D$. A derivation terminates when no further rules from $D$ can be applied. 

The definition of a base tokenizer on $D$, which ignores the priority of rules in $D$, is as follows.

\begin{definition}
  \label{defn:tokenization}%
  For $D=[u_1 \tok v_1, \ldots, u_m \tok v_m]$ and $w=\alpha_1 \cdots \alpha_n$, we obtain the \emph{base tokenizations of $w$ by $D$,} denoted as $\Tk^D_{\textrm{base}}(w) \subset \sig^{\tok}$, as follows. 
We have $\tau_p \in \Tk^D_{\textrm{base}}(w)$ if: 
  \begin{itemize}
  \item $\tau_0=\TkEmpty(w)$,
    \item $\tau_0 \Rightarrow \cdots \Rightarrow \tau_p$,
    \item there exists no $\tau_{p+1}$ such that $\tau_p \Rightarrow \tau_{p+1}$.
  \end{itemize}
\end{definition}

That is, $\Tk^D_{\textrm{base}}(w)$ are the tokenizations of $w$ which can be achieved by applying rules to an initial tokenization $\TkEmpty(w)$ where all symbols in $w$ are their own token, until a point where no further rules can be applied. 
Thus, to obtain one of the possible base tokenizations, we select non-deterministically a rule from $D$ that can be applied to the current tokenization, until the set of rules that could be applied, is empty.
Given that $|\Tk^D_{\textrm{base}}|\ge1$, since there is non-deterministic choice in selecting the next applicable rule, a SentencePiece tokenizer~\cite{sentencepiece} is defined to remove ambiguity from the base tokenizer. We will (in a somewhat biased way) refer to this tokenization as the correct tokenization.

\begin{definition}
  \label{defn:tokenization-correct}%
  The \emph{SentencePiece} tokenization of $w$, denoted $\Tk^D(w)$, also referred to as the \emph{correct tokenization} of $w$, is $\tau_n$, with $\tau_n\in \Tk^D_{\textrm{base}}(w)$,  where:
  \begin{itemize}
  \item $\tau_0=\TkEmpty(w)$;
  \item $\tau_0 \Rightarrow \cdots \Rightarrow \tau_n$, and for $0\le i < n$, we pick the decomposition $\tau_i=\phi \tok u \tok v \tok \phi'$, to obtain $\tau_{i+1}=\phi \tok uv \tok \phi'$, in such a way that:
  \begin{itemize}
    \item $u\tok v$ is the highest priority rule in $D$ for which such a decomposition exists;
    \item among the remaining decompositions, we pick the \changeM{unique} one which minimizes $|\phi|$;
  \end{itemize} 
  \item no further rules apply to $\tau_n$.
  \end{itemize}
\end{definition}

Observe that $\Tk^D(w)$ always exists, and is obtained, intuitively, as follows.
Whenever it is possible to apply a rule from the dictionary $D$ to merge some tokens in the interim tokenization, merge the \emph{highest-priority} rule that occurs, and merge the left-most such pair if multiple occurrences exist. Note that this selects a unique tokenization, for each string $w$, from the set $\Tk^D_{base}(w)$.

\begin{example}
  Take the dictionary $D=[a\tok b, a \tok bc, b\tok c, ab \tok c]$, then the correct tokenization of the string $abcbcab$ is $abc \tok bc \tok ab$, using the following steps:
  \begin{itemize}
    \item Initially, $\tau_0 = a\tok b\tok c \tok b \tok c \tok a \tok b$, $a\tok b\in D$ applies to the leftmost $a \tok b$, producing $\tau_1 = ab \tok c \tok b \tok c \tok a \tok b$.
    \item The rule $a\tok b$ still applies, now to the last two tokens, producing $\tau_2 = ab \tok c \tok b \tok c \tok ab$. The rule $a \tok b$ now no longer applies anywhere.
    \item The next rule $a \tok bc$ does not apply, as there is no token $bc$, but $b \tok c$ does apply, producing $\tau_3 = ab \tok c \tok bc \tok ab$.
    \item The first rule that can now be applied is the rule $ab \tok c$, producing $\tau_4 = abc \tok bc \tok ab$. Now, no further rules apply, so $abc \tok bc \tok ab$ is the correct (SentencePiece) tokenization.
  \end{itemize}
  It is interesting to observe that the rule $a \tok bc$ has the second-highest priority in the dictionary, but was never applied. Also, when using the base tokenizer, $a \tok bc$ can certainly be applied when tokenizing the string $abcbcab$. Observe that applying $a\tok bc$ would produce the token $abc$, but tokenizing the string $abc$ takes the steps $a\tok b \tok c \Rightarrow ab \tok c \Rightarrow abc$. Later on in Corollary~\ref{cor:useless} we show that a rule $u_i\tok v_i$ is \emph{useful}, i.e.\ gets applied in the tokenization of some string, if and only if it gets applied when tokenizing the string $u_iv_i$. 
  \eqed
\end{example}

\begin{example}
  \label{ex:infinite-ripple}%
  Take the dictionary $D=[c\tok ab, ab \tok c, a \tok b]$. Then the correct tokenization of $abcabcabcabc$ is $abc \tok abc \tok abc \tok abc$. Notice that this tokenization is achieved left to right. After five steps, we have the tokenization $abc \tok abc \tok ab \tok c \tok a \tok b \tok c$. Contrast this to tokenizing the string $bcabcabcabc$ (i.e.\ we delete the initial $a$) which tokenizes as $b\tok cab \tok cab \tok cab\tok c$, or $cabcabcabcabc$ (i.e.\ adding an initial $c$), which leads to $cab \tok cab \tok cab \tok cab \tok c$, where we need to move left to apply $c\tok ab$, after having applied $a \tok b$. This illustrates that a small modification to the string can cause an arbitrarily large change to the resulting tokenization (to the right).
  \eqed
\end{example}

Interestingly enough, not all tokenization libraries modify the base tokenizer in the same way in order to eliminate ambiguity of tokenization. Let us consider the Python implementation of the GPT-2 tokenizer offered by HuggingFace~\cite{huggingface-gpt-2-py}, which removes ambiguity from the base tokenizer, as follows.

\begin{definition}
  \label{defn:tokenization-hf}%
  The \emph{HuggingFace tokenization} $\tau_n$ of $w$, which we denote by $\Tk^D_{\textrm{hf}}(w)\in \Tk^D_{\textrm{base}}(w)$, is defined as follows, where $\tau_0 = \TkEmpty(w)$. In $\tau_0 \Rightarrow^+ \cdots \Rightarrow^+ \tau_n$, for $0\le i < n$, we select a decomposition $\tau_i=\phi \tok u \tok v \tok \phi'$, such that $u\tok v\in D$ is the highest priority rule applicable to $\tau_i$, and then apply $u\tok v$ from left to right, until it is no longer applicable, in order to obtain \changeM{the unique} $\tau_{i+1}$.
\end{definition}

That is, in both Definitions~\ref{defn:tokenization-correct} and~\ref{defn:tokenization-hf} we rewrite $\tau_i$ by picking the highest-priority applicable rule and applying it at the left-most possible position. However, the SentencePiece semantics picks the highest-priority applicable rule in \emph{every} step, where HuggingFace picks a rule and uses it until it becomes inapplicable. This does create a formal difference in semantics, but as we will see they differ only in cases which may be considered degenerate given the way dictionaries are usually constructed.

\begin{example}
  Take the dictionary $D=[ab\tok a, a\tok b]$ and consider the tokenization of $w=abababab$, which has $\Tk^D(w)=aba \tok b \tok aba \tok b$, but $\Tk^D_{\textrm{hf}}(w)=ab \tok ab\tok ab \tok ab$.
  \eqed
\end{example}

\begin{remark}
  \label{remark:bpe-construction}%
The reason why the $D$ in the previous example is regarded to be degenerate or improper, is that a byte pair dictionary is typically produced~\cite{bpe} by taking a training corpus, initially tokenizing it symbol by symbol, and then iteratively adding the most common token pair to the dictionary, tokenizing, and repeating. For example, in the training corpus $a\tok b \tok c\tok a\tok b\tok c\tok a\tok c\tok a$, the most common token pair is $c\tok a$, which is inserted as the first rule in the dictionary. Then, we continue using $a\tok b \tok ca \tok b \tok ca \tok ca$. Now, the most common pair is $b\tok ca$, and this rule is added to the dictionary, which now consists of the rules $[c\tok a, b \tok ca]$. The new dictionary now produces $a\tok bca \tok bca \tok ca$, as tokenization, and so on. Observe that when constructing a dictionary in this way, a rule $u_i\tok v_i$ cannot have higher priority than the rules needed to produce $u_i$ and $v_i$, a property formalized in the next definition. 
Also, when tokenizing the training corpus with the dictionary obtained through training, using HuggingFace semantics, the tokenization obtained will be the tokenization of the training corpus at the end of the training process, and each rule in the dictionary will be used in this tokenization.
\eqed
\end{remark}
\begin{definition}
  A dictionary $D=[u_1\tok v_1,\ldots u_i\tok v_i,\ldots,u_j\tok v_j,\ldots,u_n\tok v_n]$ is \emph{proper} if for each $j$ with $|u_j|>1$, there exists $i<j$ such that $u_j=u_{i}v_{i}$, and similarly, for each $j'$ with $|v_{j'}|>1$, there exists $i'<j'$ such that $v_{j'}=u_{i'}v_{i'}$.
\end{definition}

Note that a proper dictionary may still contain rules which are not useful. Consider for example the dictionary $D=[b\tok c,a\tok b,c\tok d, ab\tok cd]$. Then $D$ is proper, but $ab\tok cd$ is not useful. This can be seen by noting that when $\TkEmpty(w)$ equals $a\tok b\tok c \tok d$, then the first rule gets applied to produce $a\tok bc \tok d$, and thus it is not possible to apply the rules $a\tok b$ and $c\tok d$, in order so that $ab\tok cd$ could be applied. But note that if $D$ is constructed from a training corpus, then $D$ is proper and each rule in $D$ is useful. 

With the additional assumption that the dictionary is proper, the SentencePiece and HuggingFace tokenizers turn out to have equivalent semantics. This should to some extent be expected, as they are intended to achieve the same results, the HuggingFace approach effectively being a small simplification.
\begin{lemma}
  \label{lemma:proper-unifies-semantics}%
  If $D$ is proper, we have $\Tk^D(w)=\Tk^D_{\textrm{hf}}(w)$ for all $w$.
\end{lemma}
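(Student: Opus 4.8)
The plan is to reduce the claim to a single \emph{monotonicity} property of the greedy process and then read off the equality of the two semantics almost for free. Recall that both Definition~\ref{defn:tokenization-correct} and Definition~\ref{defn:tokenization-hf} repeatedly select the highest-priority applicable rule and apply it at the left-most position; they differ only in granularity, SentencePiece re-selecting after every single merge while HuggingFace keeps the selected rule until it is no longer applicable. I would therefore aim to prove the following: for proper $D$, applying the highest-priority applicable rule $u\tok v$ to a reachable tokenization $\tau$ never makes a rule of strictly higher priority applicable. Call this \emph{monotonicity}; note that the partial derivations of both semantics stay within the set of tokenizations reachable by such greedy steps, so the statement can be used by either side.

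Granting monotonicity, the lemma follows by induction. Since no higher-priority rule is ever enabled, the SentencePiece derivation decomposes into maximal runs, each run exhausting a single rule at successive left-most positions, and the priority of the applied rule strictly increases from one run to the next (after a rule is exhausted it no longer applies, and nothing of higher priority has been enabled). The HuggingFace derivation has exactly the same shape, because each of its macro-steps is itself a run of SentencePiece micro-steps to which monotonicity applies. Both derivations start from $\TkEmpty(w)$, and at each stage both pick the current highest-priority applicable rule and exhaust it; since exhausting a single fixed rule left to right is deterministic---merging $u\tok v$ into the token $uv$ creates no new occurrence of $u\tok v$ to its left, as $uv\neq u$ and $uv\neq v$---the two processes agree run by run and hence produce the same final tokenization. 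This yields $\Tk^D(w)=\Tk^D_{\textrm{hf}}(w)$.

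To prove monotonicity I would examine a single step. Writing $\tau\Rightarrow\tau'$ by merging the left-most $u\tok v$, the only token adjacencies that change are those around the new token $uv$, so any rule applicable in $\tau'$ but not in $\tau$ must match $uv$ as one of its two components, i.e.\ have the form $uv\tok y$ or $x\tok uv$. Suppose such a rule $r$ had strictly higher priority than $u\tok v$. As $|uv|\ge 2$, properness supplies a rule $u'\tok v'$ with $u'v'=uv$ and priority even higher than $r$, hence higher than $u\tok v$. If this split coincides with $u\tok v$ (that is $u'=u$ and $v'=v$), then $u'\tok v'$ is already applicable in $\tau$, because $u$ and $v$ are adjacent there; this contradicts $u\tok v$ being the highest-priority rule applicable to $\tau$, and already rules out the degenerate dictionary $[ab\tok a,a\tok b]$ of the preceding example.

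The hard part is the remaining case, where the higher-priority rule $u'\tok v'$ splits the string $uv$ at a position different from $u\tok v$, so that $u'\tok v'$ is \emph{not} applicable in $\tau$ and no immediate contradiction is available. Resolving this requires a structural invariant on reachable tokenizations rather than a purely local argument: I would prove, by induction along the derivation, that under a proper dictionary every token produced by the greedy process is \emph{saturated} with respect to higher-priority rules---intuitively, each compound token is built by its own highest-priority formation, so its internal bracketing is forced and cannot be re-split by a latent higher-priority rule. With this invariant in hand, a misaligned higher-priority $u'\tok v'$ forming $uv$ would have required $u$ or $v$ to have been assembled differently (or $u'\tok v'$ itself to have fired) before $\tau$ was reached, contradicting the minimality of the offending step. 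Establishing and maintaining this invariant, and verifying that properness is exactly what makes it persist under merging, is where I expect the real work to lie.
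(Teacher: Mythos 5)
Your reduction is the same one the paper uses: everything hinges on the single claim that, for a proper dictionary, applying the currently highest-priority applicable rule never enables a rule of strictly higher priority. The paper phrases this as a contradiction at the first index where the two derivation sequences diverge, but the content is identical to your monotonicity property, and your derivation of the lemma from monotonicity is fine. The problem is that you have not proved monotonicity: you dispose of the case where the newly created token $uv$ is re-split as $u\tok v$ itself, and then explicitly leave open the misaligned-split case, gesturing at a ``saturation invariant'' to be established later. That open case is the entire content of the lemma, so as it stands the proposal is an honest but incomplete sketch. For comparison, the paper closes this point in one sentence, asserting that a lower-priority rule creating a token needed by a higher-priority rule directly contradicts properness.

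Your instinct that the misaligned case is where the real difficulty lies is well founded, and you should be aware that it cannot be closed from properness as literally defined: properness only guarantees that \emph{some} higher-priority rule produces each compound component, not that \emph{every} rule producing it has higher priority. Concretely, take $D=[a\tok b,\ b\tok a,\ a\tok ba,\ aba\tok ab,\ ab\tok a]$. This is proper (the rule $a\tok ba$ witnesses the component $aba$ of $aba\tok ab$, and $a\tok b$ witnesses $ab$), yet on $w=abaaba$ both semantics first reach $ab\tok a\tok ab\tok a$, after which the lowest-priority rule $ab\tok a$ is the only applicable one; applying it leftmost creates the token $aba$ adjacent to $ab$ and thereby enables the higher-priority rule $aba\tok ab$, which SentencePiece takes immediately (yielding $abaab\tok a$) while HuggingFace first exhausts $ab\tok a$ (yielding $aba\tok aba$). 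So no invariant provable from properness alone can rescue your plan; you would need an additional hypothesis, e.g.\ that no two rules of $D$ produce the same token (true of dictionaries actually produced by BPE training, where each merge adds a fresh vocabulary item), under which the only rule forming $uv$ is $u\tok v$ itself and your aligned-split argument already suffices.
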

\begin{proof}
  By contradiction, assume that some $w$ has $\Tk^D(w)\ne \Tk^D_{\textrm{hf}}(w)$. Let $\tau_0 \Rightarrow \cdots \Rightarrow \tau_n$ be the tokenization steps taken by $\Tk^D(w)$, and $\phi_0 \Rightarrow \cdots \Rightarrow \phi_m$ the tokenization steps taken by $\Tk^D_{\textrm{hf}}$. Let $i$ be the smallest index such that $\tau_{i} \ne \phi_{i}$. Note, such an $i$ must exist, otherwise, one sequence would be a subsequence of the other, which is impossible by Definition~\ref{defn:tokenization}.
  Thus, one sequence cannot be a proper prefix of the other. We also have $i\ge 2$, as the semantics differ only in that Definition~\ref{defn:tokenization-hf} prefers repeating the previous rule over the highest priority one, but this difference can only be exhibited when there is a previous step to repeat.

  We then have $\tau_{i-2}=\phi_{i-2}$, $\tau_{i-1}=\phi_{i-1}$ and $\tau_i\ne \phi_i$. Let $r_0$ be the rule applied in $\tau_{i-2}\Rightarrow \tau_{i-1}$ (and $\phi_{i-2}\Rightarrow \phi_{i-1}$ as they are equal), $r_1$ the rule in $\tau_{i-1} \Rightarrow \tau_i$, and $r_2$ the rule in $\tau_{i-1} \Rightarrow \phi_i$. That is, with some abuse of notation, the following situation:
  \begin{center}
    \begin{tikzpicture}[node distance=0pt]
      \node (a) {$\cdots \Rightarrow (\tau_{i-2}=\phi_{i-2}) \xRightarrow{r_0} (\tau_{i-1}=\phi_{i-1})$};
      \node[right=of a,yshift=6pt,rotate=15] {$\xRightarrow{r_1} \tau_i \Rightarrow \cdots$};
      \node[right=of a,yshift=-6pt,rotate=-15] {$\xRightarrow{r_2} \phi_i \Rightarrow \cdots$};

    \end{tikzpicture}
  \end{center}
  Then we know that $r_1$ has higher priority than $r_2$, since they must differ, and Definition~\ref{defn:tokenization-correct} (SentencePiece) always picks the highest priority rule applicable. The only possible reason for them to differ is that $r_0=r_2$, i.e.\  the Definition~\ref{defn:tokenization-hf} (HuggingFace) semantics prioritized using the same rule as in the previous step. However, as they agree in the previous step this means that we picked $r_0$ in that step by virtue of Definition~\ref{defn:tokenization-correct}, even though $r_1$ has higher priority than $r_0$, which must mean that $r_1$ was \emph{not} applicable before. This leads to a contradiction, as applying $r_0$ must then have created a token which made $r_1$ applicable, which, since $r_0$ \changeB{is} of lower priority, contradicts $D$ being a proper dictionary. As such, our assumption was wrong and $\Tk^D_{\textrm{hf}}(w)=\Tk^D(w)$ by necessity. 
\end{proof}

With this result in hand, it becomes less relevant to differentiate between the two semantics whenever considering only proper dictionaries. 

\begin{remark}
  It can be decided whether $D$ is proper in time $\mathcal{O}(\|D\|^2)$ (assuming $|uv|$ is constant for all rules $u\tok v$ in $D$), where $\|D\|=\sum \{|uv| \mid u\tok v \in D\}$. 
  For each $u\tok v\in D$, determine all $u'\tok v'$ such that $uv$ is a substring of $u'$ or $v'$, and for all such rules  $u'\tok v'$, verify that $u\tok v$ has lower priority than $u'\tok v'$.
 \eqed
\end{remark}


Next, we investigate the relationship between the tokenization of substrings of $w$, and the tokenization of $w$. First, we consider the following example. Let $u,v$ be strings with $\Tk^D(u)=\tau_1 \tok \tau_2$ and $\Tk^D(v)=\phi_1 \tok \phi_2$. Then it is not necessarily the case that $\tau_1 \tok \phi_2 = \Tk^D(\pi(\tau_1\tok \phi_2))$ or that $\phi_1 \tok \tau_2 = \Tk^D(\pi(\phi_1\tok \tau_2))$. An easy counterexample is obtained by letting $D=[a \tok a, b\tok b]$, $\tau_1 = a$, $\tau_2=b$, $\phi_1=b$, and $\phi_2=a$. Observe that indeed $\Tk^D(ab)=a\tok b= \tau_1 \tok \tau_2$, and $\Tk^D(ba)=b\tok a=\phi_1 \tok \phi_2$, \emph{however} $\tau_1\tok \phi_2=a\tok a\ne \Tk^D(aa)$ and $\phi_1 \tok \tau_2 = b\tok b \ne \Tk^D(bb)$. 
This shows that tokenizations can not be decomposed and then again glued together in arbitrary ways. 
However, deriving the tokenization of substrings of a given string $w$, given the final full tokenization of $w$, is sometimes possible, as shown in the following lemma. 


\begin{lemma}
  \label{lemma:robust-tok}%
  Tokenization derivations and tokenizations have the following properties:
  \begin{enumerate}
  \item[(i)] For both SentencePiece and HuggingFace, if $\phi_1\tok\ldots\tok\phi_k\Rightarrow^*\phi'_1\tok\ldots\tok\phi'_k$, then if $\pi(\phi_i)=\pi(\phi'_i)$ for all $i$, we have that $\phi_i\Rightarrow^*\phi'_i$ for all  $i$. 
  \item[(ii)] For a dictionary $D$ and string $w$ such that $\Tk^D(w)=\tau_1 \tok \ldots \tok \tau_k$ (or $\Tk_{hf}^D(w)=\tau_1 \tok \ldots \tok \tau_k$), it holds that $\Tk^D(\pi(\tau_i))=\tau_i$ (respectively $\Tk_{hf}^D(\pi(\tau_i))=\tau_i$). 
  \end{enumerate}

\end{lemma}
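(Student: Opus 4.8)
The plan is to reduce both parts to one structural fact about the rewriting relation: a step $\tau \Rightarrow \tau'$ only ever \emph{removes} a token boundary (it merges two adjacent tokens and never splits one). Consequently, if we label each string position of $w$ by whether a token boundary sits there, the set of boundary positions shrinks monotonically along any derivation, regardless of which semantics chose the steps. Note that part (i) is really a statement about $\Rightarrow^*$ itself, so the phrase ``for both SentencePiece and HuggingFace'' is immaterial to it and follows from this relation-level argument.

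For part (i) I would track the $k-1$ \emph{block boundaries}, i.e.\ the positions in the string separating $\pi(\phi_i)$ from $\pi(\phi_{i+1})$. Since $\pi(\phi_i)=\pi(\phi'_i)$ for every $i$, these are exactly the positions separating the blocks in the target $\phi'_1\tok\cdots\tok\phi'_k$; hence they occur as token boundaries both at the start and at the end of the derivation. By monotonicity they must occur throughout, so no single merge ever combines a token ending at a block boundary with one starting there; every merge lies strictly inside a single block. I would then finish by induction on derivation length: the first step falls in a unique block $j$, giving $\phi_1\tok\cdots\tok\phi_k \Rightarrow \psi_1\tok\cdots\tok\psi_k$ with $\phi_j\Rightarrow\psi_j$, $\phi_\ell=\psi_\ell$ for $\ell\neq j$, and $\pi(\psi_i)=\pi(\phi_i)$ preserved. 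The inductive hypothesis applied to the shorter tail derivation yields $\psi_i\Rightarrow^*\phi'_i$, and composing with $\phi_j\Rightarrow\psi_j$ gives $\phi_i\Rightarrow^*\phi'_i$ for all $i$.

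For part (ii) I would instantiate this locality on the specific derivation chosen by the semantics. Writing $\Tk^D(w)=\tau_1\tok\cdots\tok\tau_k$, we have $\TkEmpty(w)=\TkEmpty(\pi(\tau_1))\tok\cdots\tok\TkEmpty(\pi(\tau_k))$ as the initial tokenization and $\tau_1\tok\cdots\tok\tau_k$ as the final one, so part (i) gives $\TkEmpty(\pi(\tau_i))\Rightarrow^*\tau_i$; moreover no rule applies to the terminal $\Tk^D(w)$, hence none applies inside $\tau_i$, so $\tau_i\in\Tk^D_{\textrm{base}}(\pi(\tau_i))$. The remaining task is to show $\tau_i$ is the \emph{chosen} tokenization, not merely a base one. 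I would show that the projection $d_i$ of the global derivation onto block $i$ makes, at each step, exactly the choice the semantics prescribes for the isolated string $\pi(\tau_i)$; since that semantics is deterministic, $d_i$ coincides with it and ends at $\Tk^D(\pi(\tau_i))=\tau_i$. For SentencePiece this is direct: whenever a global step lands in block $i$ it applies the globally highest-priority rule at the globally leftmost position, and any higher-priority rule applicable inside block $i$, or the same rule at a position further left inside block $i$, would be applicable/leftmost globally too, so the step is highest-priority-and-leftmost within block $i$.

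The main obstacle is the HuggingFace case, where a rule is fired until it is inapplicable before priority is re-evaluated, so I must rule out that the global phase order \emph{skips} the rule the isolated semantics would fire next in block $i$. The key step: suppose block $i$ is frozen in state $s$ and $r^*$ is the highest-priority rule applicable to $s$. Then $r^*$ is applicable to the whole global tokenization, so the next global phase that actually touches block $i$ cannot use a rule of lower priority than $r^*$ (that would contradict HuggingFace always choosing the highest-priority applicable rule) and, by the choice of $r^*$, cannot use a still-applicable rule of higher priority either; hence it uses exactly $r^*$, applied leftmost within block $i$ until inapplicable there, which is precisely the isolated HuggingFace phase. Iterating this over the phases, together with the terminality of $\tau_i$, gives $\Tk^D_{\textrm{hf}}(\pi(\tau_i))=\tau_i$.
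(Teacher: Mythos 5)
Your proposal is correct and follows essentially the same route as the paper's proof: both project the global derivation onto the individual blocks, relying on the fact that a merge only ever removes a token boundary, so the block boundaries present at the start and end persist throughout and every step falls strictly inside one block. The difference is one of rigor rather than approach---where the paper simply asserts that the deduplicated projection of the steps onto block $i$ \emph{is} the SentencePiece (resp.\ HuggingFace) derivation of $\pi(\tau_i)$, you supply the justification that this projection makes the same deterministic choices as the isolated semantics, including the phase-by-phase argument needed for the HuggingFace case.
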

\begin{proof}
  For (i), let $\phi_{1,1} \tok \ldots \tok \phi_{k,1} \Rightarrow \cdots \Rightarrow \phi_{1,n} \tok \ldots \tok \phi_{k,n}$ be the steps taken by the procedure in Definition~\ref{defn:tokenization}, such that $\phi_{i,1}=\phi_i$ and $\phi_{i,n}=\phi'_i$ for all $i$, and $\pi(\phi_{i,j})=\pi(\phi_{i,j'})$ for all $i,j$ and $j'$.
  Removing all duplicates from $\phi_{i,1},\ldots,\phi_{i,n}$ produces the sequence of steps taken by SentencePiece or HuggingFace derivations, i.e.\ 
  in each step we apply the highest-priority rule from $D$ as left-most as possible, in the case of SentencePiece semantics, and we apply the highest-priority rule  as many times as possible, in the case of HuggingFace semantics.

  For (ii), take $\phi_i=\TkEmpty(\tau_i)$ and $\phi'_i=\tau_i$ in (i).
\end{proof}

\begin{remark}
  \label{rem:parts-of-toks}%
  A trivial outcome of this lemma is then that one can freely truncate tokenizations. I.e.\ if we have tokenized a long text $w$ and are only interested in a prefix, we can pick a suitable prefix of the tokenization (not the string) and it will be correct for the string it represents. A similar remark holds for a suffix of a tokenization.
\end{remark}




\begin{corollary}\label{cor:useless}
For SentencePiece or HuggingFace semantics, a rule in a dictionary $D$ is useful if and only if it gets applied when tokenizing the string it produces.  
\end{corollary}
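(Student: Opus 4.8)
The plan is to prove the two directions separately, with essentially all the work going into the forward implication. The backward direction is immediate: if the rule $r=u_i\tok v_i$ is applied when tokenizing the string $u_iv_i$, then by definition $r$ is applied in the tokenization of \emph{some} string, hence $r$ is useful. The argument is identical for the SentencePiece and HuggingFace semantics, so I would fix one and note that the reasoning is semantics-independent.

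For the forward direction, suppose $r=u_i\tok v_i$ is useful, i.e.\ it is applied somewhere in the tokenization derivation $\TkEmpty(w)=\tau_0\Rightarrow\cdots\Rightarrow\Tk^D(w)$ of some string $w$. Let $\tau_t\Rightarrow\tau_{t+1}$ be a step applying $r$, so that $\tau_t=\psi\tok u_i\tok v_i\tok\psi'$ for some $\psi,\psi'$. First I would observe that the two adjacent tokens $u_i$ and $v_i$ delimit a contiguous block of positions of $w$, so the prefix $\TkEmpty(w)\Rightarrow^*\tau_t$ of the derivation can be viewed as a derivation between three-part tokenizations: writing $\TkEmpty(w)=A_0\tok B_0\tok C_0$ with $B_0=\TkEmpty(u_iv_i)$ the block spelling $u_iv_i$ (and $A_0,C_0$ the surrounding blocks), and $\tau_t=A\tok B\tok C$ with $A=\psi$, $B=u_i\tok v_i$, $C=\psi'$, we have $\pi(A_0)=\pi(A)$, $\pi(B_0)=\pi(B)=u_iv_i$ and $\pi(C_0)=\pi(C)$.

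The key step is then to apply part~(i) of Lemma~\ref{lemma:robust-tok} to $A_0\tok B_0\tok C_0\Rightarrow^*A\tok B\tok C$ with $k=3$. Since the projections of corresponding parts agree, the lemma yields $B_0\Rightarrow^*B$, i.e.\ $\TkEmpty(u_iv_i)\Rightarrow^*u_i\tok v_i$, and moreover (by the proof of that lemma) the restricted steps form a valid SentencePiece/HuggingFace derivation. As the tokenizer is deterministic, this is exactly an initial segment of the derivation of the string $u_iv_i$; in particular that derivation passes through the two-token tokenization $u_i\tok v_i$. To finish, I would note that a two-token tokenization $u_i\tok v_i$ admits only the decomposition $\phi\tok u\tok v\tok\phi'$ with $\phi,\phi'$ empty, $u=u_i$, $v=v_i$, so the only rule that can apply to it is $r$ itself (assuming $D$ has no two literally identical entries, which is degenerate). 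Since a rule is still applicable, the derivation does not terminate here and is forced to apply $r$; hence $r$ is applied when tokenizing $u_iv_i$.

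I expect the main obstacle to be the bookkeeping in the three-part decomposition: verifying that the adjacent tokens $u_i,v_i$ genuinely delimit a block whose projection is preserved throughout the prefix derivation, so that the hypotheses of Lemma~\ref{lemma:robust-tok} are met and the restricted derivation is the canonical one for $u_iv_i$. Once this locality is established, the concluding ``only one applicable rule at a two-token tokenization'' observation is routine and works uniformly for both semantics.
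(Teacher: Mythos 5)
Your proposal is correct and follows essentially the same route as the paper: the backward direction is immediate from the definition of useful, and the forward direction decomposes the intermediate tokenization as $\phi\tok u_i\tok v_i\tok\phi'$ and invokes Lemma~\ref{lemma:robust-tok}(i) on the corresponding three-part split of $\TkEmpty(w)$ to conclude $\TkEmpty(u_iv_i)\Rightarrow^* u_i\tok v_i\Rightarrow u_iv_i$. Your extra bookkeeping about the block structure and the ``only one applicable rule at a two-token tokenization'' observation just makes explicit what the paper leaves implicit.
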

    
\begin{proof}
The ``if'' part follows directly from the definition of useful, so for the converse, assume
we have a derivation $\alpha_1\tok\cdots\tok\alpha_n\Rightarrow \cdots \Rightarrow \phi \tok u \tok v \tok \phi' \Rightarrow \phi\tok uv\tok \phi'$. But then the previous lemma implies that $\TkEmpty(uv)\Rightarrow^* u\tok v\Rightarrow uv$.
\end{proof}

The main purpose of Lemma~\ref{lemma:robust-tok} is that it makes it possible to do tokenizations in streaming and incremental ways. 
Algorithm~\ref{algo:incremental-concatenate} below shows how this is achieved, but in order to establish the correctness of this algorithm, we first need the following corollary, which shows how we can split and then glue tokenizations.
\begin{corollary}
  \label{cor:bordered-stays-same}%
  If $\Tk^D(v)=\tau_1 \tok u \tok \tau_2$ and $\Tk^D(w)=\phi_1 \tok u \tok \phi_2$ then we also have $\Tk^D(\pi(\tau_1 \tok u \tok \phi_2))=\tau_1 \tok u \tok \phi_2$ and $\Tk^D(\pi(\phi_1 \tok u \tok \tau_2))=\phi_1 \tok u \tok \tau_2$. The same result holds for the HuggingFace semantics.
\end{corollary}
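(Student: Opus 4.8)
The plan is to reduce the whole statement to one structural fact: that the correct tokenization of the glued string keeps $u$ as a standalone token sitting exactly at the splice position. Write $z=\pi(\tau_1\tok u\tok\phi_2)=\pi(\tau_1)\,u\,\pi(\phi_2)$ and let $p=|\pi(\tau_1)|$ and $q=p+|u|$, so the distinguished occurrence of $u$ inside $z$ spans positions $p+1,\dots,q$. If the SentencePiece derivation of $z$ never merges across position $p$ nor across position $q$, then $\Tk^D(z)=A\tok M\tok B$ where $A$, $M$, $B$ tokenize $\pi(\tau_1)$, $u$, $\pi(\phi_2)$ respectively; uniqueness of $\Tk^D$ together with the truncation results below will then pin down $A=\tau_1$, $M=u$, $B=\phi_2$, which is the claim. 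The companion identity (splicing $\phi_1$ with $\tau_2$) is symmetric, obtained by exchanging the roles of $v$ and $w$.

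First I would collect the truncation facts, which are instances of Lemma~\ref{lemma:robust-tok} via Remark~\ref{rem:parts-of-toks}: since $\tau_1\tok u$ is a prefix (as a tokenization) of $\Tk^D(v)$ we get $\Tk^D(\pi(\tau_1\tok u))=\tau_1\tok u$, and since $u\tok\phi_2$ is a suffix of $\Tk^D(w)$ we get $\Tk^D(\pi(u\tok\phi_2))=u\tok\phi_2$. Write $L=\pi(\tau_1)\,u$ and $R=u\,\pi(\phi_2)$ for the two ``certified'' blocks, overlapping exactly in the $u$-region, so that $\Tk^D(L)=\tau_1\tok u$ and $\Tk^D(R)=u\tok\phi_2$. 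By Lemma~\ref{lemma:robust-tok}(i) the SentencePiece derivation of $L$ never merges across its internal boundary $p$ — recovering that boundary would require re-splitting a token, whereas rules only merge — and likewise the derivation of $R$ never merges across the boundary between $u$ and $\pi(\phi_2)$, namely $q$.

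The core is a first-crossing argument. Suppose some merge in the SentencePiece derivation of $z$ crosses $p$ or $q$, and take the first such step; just before it the current tokenization decomposes as $A\tok M\tok B$ over the regions $\pi(\tau_1)$, $u$, $\pi(\phi_2)$. The crucial ingredient is a matching observation: as long as no merge has yet crossed $q$, every step of the $z$-derivation acting inside the region $L$ applies exactly the rule, at exactly the position, that the SentencePiece derivation of $L$ applies at the current $L$-tokenization $A\tok M$. Indeed, the globally highest-priority applicable rule, if its leftmost occurrence lies in $L$, is also the highest-priority rule applicable to $A\tok M$ (the adjacencies of $A\tok M$ form a subset of those of $A\tok M\tok B$) and its leftmost occurrence inside $A\tok M$ is its global leftmost occurrence, since nothing lies to the left of $L$; steps acting inside $\pi(\phi_2)$ leave the $L$-part untouched. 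Hence the $L$-tokenizations visited form a prefix of the deterministic SentencePiece path for $L$, so $A\tok M$ lies on that path. If the first crossing is across $p$ it is an $L$-internal step and, by the matching observation, equals the SentencePiece-on-$L$ step out of $A\tok M$; but that derivation never crosses $p$ — contradiction. A symmetric matching observation for $R$ (there a global-leftmost occurrence lying in $R$ cannot occur in $\pi(\tau_1)$, so ``leftmost in $R$'' agrees with ``global leftmost'') rules out a first crossing across $q$. Thus neither boundary is ever crossed, and since $\Tk^D(z)$ is terminal its $L$-part $A\tok M$ is a terminal tokenization lying on the SentencePiece path of $L$, hence equals $\Tk^D(L)=\tau_1\tok u$; likewise the $R$-part equals $u\tok\phi_2$, giving $\Tk^D(z)=\tau_1\tok u\tok\phi_2$.

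For the HuggingFace semantics the same scheme applies: the truncation facts of Lemma~\ref{lemma:robust-tok} and Remark~\ref{rem:parts-of-toks} hold verbatim, and the matching observation is applied to the individual rule applications inside each left-to-right macro-step — the chosen rule is still the globally highest-priority one, and its applications inside $L$ precede any that could reach $q$ — so the identical first-crossing contradiction goes through. I expect the main obstacle to be precisely this matching observation: a priori the right context $\pi(\phi_2)$ can change which token the neighborhood of $u$ forms during the greedy process, exactly the rippling effect of Example~\ref{ex:infinite-ripple}, so one must argue carefully that, before any boundary is crossed, the choices the global process makes inside a certified block are forced to coincide with those of the single-block process, and in particular that the intermediate state $A\tok M$ genuinely lies on the deterministic SentencePiece path for $L$ rather than merely being some base-reachable tokenization.
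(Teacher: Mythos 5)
Your proof is correct, and it opens exactly as the paper's does: both first extract the truncation facts $\Tk^D(\pi(\tau_1\tok u))=\tau_1\tok u$ and $\Tk^D(\pi(u\tok \phi_2))=u\tok\phi_2$ from Lemma~\ref{lemma:robust-tok} via Remark~\ref{rem:parts-of-toks}. Where you genuinely diverge is in how the two certified blocks are glued. The paper argues bottom-up and constructively: it interleaves the derivation steps of the two separate tokenizations of $L=\pi(\tau_1)u$ and $R=u\,\pi(\phi_2)$, deduplicating the steps applied to the overlapping token $u$, to exhibit a terminating derivation of the concatenation ending at $\tau_1\tok u\tok\phi_2$; the shared token $u$ is what guarantees neither interleaved derivation can disturb the other. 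You instead work top-down on the canonical derivation of the concatenated string itself, using a first-crossing argument supported by your ``matching observation'' (the globally highest-priority applicable rule, when it acts inside a prefix or suffix block, coincides in rule and position with that block's own deterministic step) to show no merge ever crosses either splice boundary. The two arguments are duals, but yours makes explicit precisely what the paper's one-line ``interleave and deduplicate'' sketch leaves implicit, and you correctly identify the matching observation as the crux --- it is exactly the point where the rippling of Example~\ref{ex:infinite-ripple} must be ruled out, and your subset-of-adjacencies and leftmost-occurrence reasoning does rule it out. The HuggingFace case is handled only in outline on your side, but the same macro-step bookkeeping goes through, so nothing essential is missing.
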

\begin{proof}
Remark~\ref{rem:parts-of-toks} give us that $\Tk^D(\pi(\tau_1\tok u))=\tau_1\tok u, \Tk^D(\pi(u\tok \tau_2))=u\tok\tau_2$ and similarly $\Tk^D(\pi(\phi_1\tok u))=\phi_1\tok u, \Tk^D(\pi(u\tok \phi_2))=u\tok\phi_2$ where $u$, $\tau_1$, $\phi_1$, $\tau_1$, and $\tau_2$ are as in the corollary above. 
The result now follows from the observation that we can glue two tokenizations together, if the end token of the first tokenization is the same as the start token of the next tokenization. The same argument holds for HuggingFace semantics. This follows the same line of argument as Lemma~\ref{lemma:robust-tok}, except instead of pruning steps from one tokenization, we interleave the steps of two, deduplicating rules applied to the overlapping token. The overlapping token ensures that one tokenization cannot ``disturb'' the other.
\end{proof}

With this in hand we can define an incremental update algorithm, which is not necessarily efficient in general (due to cases like Example~\ref{ex:infinite-ripple}), but will often do much less work than full retokenization, if we are in a situation where steps 4 and 5 are performed only a few times, i.e.\ if $i$ is close to $n$ and $j$ close to $1$ when the condition `If ($v_1=u_i$ or $i=1$) and ($v_k=u_j'$ or $j=m$)' in step 3 holds. When showing the correctness of Algorithm~\ref{algo:incremental-concatenate} below, we will use the special case of the previous corollary where $\tau_1$ and $\phi_2$ are empty tokenizations, or $\tau_2$ and $\phi_1$ are empty.


\begin{algorithm}
  \label{algo:incremental-concatenate}%
  Given $\Tk^D(w)=\tau$ and $\Tk^D(w')=\tau'$ we compute the tokenization $\Tk^D(ww')$ in the following way, assuming we are not in the trivial case where $w = \epsilon$ or $w' = \epsilon$.
  \begin{enumerate}
    \item Let $\tau=u_1\tok \cdots \tok u_n$ and $\tau'=u_1' \tok \cdots \tok u_m'$, initialize $i=n$ and $j=1$.
    \item Compute $\Tk^D(u_i \cdots u_n u_1' \cdots u_j')=v_1 \tok \cdots \tok v_k$.
    \item If ($v_1=u_i$ or $i=1$) and ($v_k=u_j'$ or $j=m$) output $u_1\tok \cdots \tok u_i \tok v_2 \tok \cdots \tok v_{k-1} \tok u_{j}' \tok \cdots \tok u_m'$ as $\Tk^D(ww')$ and halt.
    \item If $u_i\ne v_1$ and $i>1$, then $i\leftarrow i-1$.
    \item If $u_j'\ne v_k$ and $j<m$, then $j\leftarrow j+1$.
    \item Go to step 2.
  \end{enumerate}
\end{algorithm}

\begin{theorem}
  Algorithm~\ref{algo:incremental-concatenate} is correct.
\end{theorem}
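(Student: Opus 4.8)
The plan is to prove the theorem in two independent parts: that Algorithm~\ref{algo:incremental-concatenate} always halts, and that the tokenization it outputs on halting is indeed $\Tk^D(ww')$. Throughout I would read the output of step~3 as the result of \emph{splicing} the freshly computed window tokenization $v_1\tok\cdots\tok v_k$ of $u_i\cdots u_n u_1'\cdots u_j'$ into $\tau$ and $\tau'$ in place of the overlapped tokens $u_i,\ldots,u_n$ and $u_1',\ldots,u_j'$; the step~3 condition is precisely what guarantees that the two ends of this window agree with the retained tokens $u_1\tok\cdots\tok u_{i-1}$ and $u_{j+1}'\tok\cdots\tok u_m'$, so that the splice is well defined (in particular at the degenerate ends $i=1$ or $j=m$ the corresponding retained block is empty and the boundary token is taken directly from the window).

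For termination, I would observe that across iterations $i$ is non-increasing and $j$ is non-decreasing, with $1\le i\le n$ and $1\le j\le m$. It then suffices to show that whenever step~3 fails to halt, at least one of steps~4 and~5 strictly changes $i$ or $j$. Indeed, the negation of the step~3 condition is $(v_1\ne u_i\text{ and }i\ne 1)$ or $(v_k\ne u_j'\text{ and }j\ne m)$; the first disjunct triggers the decrement in step~4 and the second the increment in step~5. Hence every non-halting iteration consumes at least one unit of the total available progress $(n-1)+(m-1)$, so after $O(n+m)$ iterations we reach $i=1$ and $j=m$, where both disjuncts of the step~3 condition hold vacuously and the algorithm halts.

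For correctness, suppose the algorithm halts with the current $i,j$ and $\Tk^D(u_i\cdots u_n u_1'\cdots u_j')=v_1\tok\cdots\tok v_k$. By Remark~\ref{rem:parts-of-toks} applied to $\Tk^D(w)=u_1\tok\cdots\tok u_n$ and to $\Tk^D(w')=u_1'\tok\cdots\tok u_m'$, the prefix $u_1\tok\cdots\tok u_i$ is $\Tk^D(u_1\cdots u_i)$ and the suffix $u_j'\tok\cdots\tok u_m'$ is $\Tk^D(u_j'\cdots u_m')$. When $v_1=u_i$, the window tokenization and this prefix share the token $u_i$ at the join, so the empty-side special case of Corollary~\ref{cor:bordered-stays-same} (with $\tau_2$ and $\phi_1$ empty) glues them into $\Tk^D(u_1\cdots u_n u_1'\cdots u_j')=u_1\tok\cdots\tok u_{i-1}\tok v_1\tok\cdots\tok v_k$; when instead $i=1$ there is no prefix and this is already $\Tk^D(u_1\cdots u_n u_1'\cdots u_j')$. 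Symmetrically, using $v_k=u_j'$ (or $j=m$) together with Corollary~\ref{cor:bordered-stays-same} at the shared token $u_j'$, I would glue the suffix $u_{j+1}'\tok\cdots\tok u_m'$ onto the right. The two gluings combine to yield $\Tk^D(u_1\cdots u_n u_1'\cdots u_m')=\Tk^D(ww')$, equal to the spliced output, as required. The HuggingFace case is identical, since Corollary~\ref{cor:bordered-stays-same} and Remark~\ref{rem:parts-of-toks} both hold for $\Tk^D_{\textrm{hf}}$.

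The main obstacle is the correctness step, and specifically the bookkeeping at the two joins: one must check that the halting condition supplies exactly the shared boundary tokens ($u_i=v_1$ on the left, $u_j'=v_k$ on the right) required to invoke Corollary~\ref{cor:bordered-stays-same}, and must handle the degenerate ends $i=1$ and $j=m$ separately, where one side of the splice is empty and the extremal token is read off the window rather than from $\tau$ or $\tau'$. Termination is comparatively routine. Finally, Example~\ref{ex:infinite-ripple} explains why no bound sharper than $O(n+m)$ on the number of iterations can be expected: a local change can ripple across the whole of $\tau$ or $\tau'$, forcing the window to grow all the way to an end.
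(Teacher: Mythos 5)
Your proof is correct and follows essentially the same route as the paper's: Remark~\ref{rem:parts-of-toks} for the retained prefix and suffix, then two applications of the empty-side special case of Corollary~\ref{cor:bordered-stays-same} ($\tau_2=\phi_1=\eps$) at the shared boundary tokens $u_i=v_1$ and $u_j'=v_k$. Your explicit termination argument and handling of the degenerate ends $i=1$, $j=m$ are welcome additions that the paper leaves implicit, but they do not change the substance of the argument.
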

\begin{proof}
  This amounts to two applications of Corollary~\ref{cor:bordered-stays-same}. The algorithm halts in a state where both $u_1 \tok \cdots \tok u_i$ and $v_1\tok \cdots \tok v_k$ are correct tokenizations, the former holds by Remark~\ref{rem:parts-of-toks} and the latter by construction. We also have $u_i=v_1$ (either that or $i=1$, but that case is trivial), which allows the application of Corollary~\ref{cor:bordered-stays-same} to establish that $\tau_1\tok u_i \tok v_2 \tok \cdots \tok v_k$ is a correct tokenization. To make this specific, in the terms of Corollary~\ref{cor:bordered-stays-same} we have $\tau_2=\phi_1=\eps$, $u=u_i=v_1$, $\tau_1=u_1 \tok \dots \tok u_{i-1}$, and $\phi_2=v_2 \tok \cdots \tok v_k$, which gives us that $\tau_1 \tok u \tok \phi_2=u_1\tok \cdots u_i \tok v_2 \tok \cdots \tok v_k$ is a correct tokenization.
  Now repeat this argument for the suffix to complete the proof.
\end{proof}

Remark~\ref{rem:parts-of-toks} and Algorithm~\ref{algo:incremental-concatenate} give tools to perform arbitrary incremental updates. For example, assume we have the tokenization $\tau$ of a (long) string $w$, and we make a small change in $w$, let's say $w=v_1 \alpha v_2$ and the updated string is $w'=v_1 \beta v_2$. Then let $\tau_1$ be the prefix of $\tau$ which falls entirely within $v_1$, let $\tau_2$ be the suffix of $\tau$ which falls entirely inside $v_2$, let $u$ be the string such that $\pi(\tau_1)u\pi(\tau_2)=w'$, then compute $\phi=\Tk^D(u)$, and obtain the tokenization of $w'$ by applying Algorithm~\ref{algo:incremental-concatenate} to concatenate $\tau_1$ to $\phi$, and then that tokenization to $\tau_2$.

Unfortunately, Algorithm~\ref{algo:incremental-concatenate} is not \emph{necessarily} more efficient than retokenizing the whole string, and might be worse. 
Of course, an experimental bound can be used in steps 3 and 4, where once we have decreased $i$ and increased $j$, more times than the specified bound, we fall back to retokenizing the complete string. In cases where there is a subset of symbols $\Sigma'$ from $\Sigma$, where symbols from $\Sigma'$ can only appear as the first or last symbols in $uv$ for rules $u\tok v$, and in all input strings there is a relatively small distance between symbols in $\Sigma'$ (i.e.\ input is not selected from all of $\Sigma^*$), we certainly have that Algorithm~\ref{algo:incremental-concatenate} is much more efficient than retokenizing the whole string.
In general, it seems likely that steps 4 and 5 in Algorithm~\ref{algo:incremental-concatenate} will be performed relatively few times in most practical dictionaries, but investigating this, is left for future work. 
We consider the worst-case in more generality in the next section. We establish a bound on how many times $i$ can be decremented determined solely by the dictionary (so a constant in the length of the string). Thus, we consider the worst-case when modifications happen late in a string, for example when applying appends. 

\section{Tokenizing Online with Finite Lookahead}

In this section, we assume all dictionaries are proper (although at times we do state this explicitly, to emphasize that we are making this assumption), so by Lemma~\ref{lemma:proper-unifies-semantics} $\Tk^D$ and $\Tk^D_{\textrm{hf}}$ are interchangeable. We investigate the following question: When we tokenize a string, in a streaming fashion, how long is the suffix that we need to tokenize again, when we resume tokenization, given we make no assumptions about the input string being tokenized. We refer to this constant as the \emph{lookahead constant}
 for a dictionary $D$, and denote it by $l(D)$. More formally, we have the following definition.
 
 \begin{definition}
  \label{defn:lookahead}%
Let $D$ be proper and $\phi, \tau$ and $\psi$ be tokenizations. Then $l(D)$, the \emph{lookahead constant} for $D$, is the smallest constant such that if $|\pi(\tau)|\ge l(D)$ and $\Tk^D(\pi(\phi)\pi(\tau))=\phi\tok \tau$, then $\Tk^D(\pi(\phi \tok \tau) \pi(\psi)) = \phi \tok \psi'$, for some tokenization $\psi'$.
 \end{definition}
 
 We show in Theorem~\ref{thm:left-ripple-bound}, that $l(D)\le |D|\cdot\max\{|uv|\mid u\tok v\in D\}$, where $|D|$  is the number of rules in $D$. After this, in Remark~\ref{rem:improved-lookahead}, we explore how to improve on this bound.

 The next lemma will be used in Theorem~\ref{thm:left-ripple-bound} to perform HuggingFace tokenization in a straightforward inductive way, where for a proper dictionary $D=[u_1\tok v_1,\ldots, u_n\tok v_n]$ we can tokenize a string by first applying the rule $u_1 \tok v_1$ as many times as possible, then the rule $u_2\tok v_2$  as many times as possible, and so on.

\begin{lemma}
  \label{lemma:hf-priority-order}%
  Let $D$ be proper dictionary and $\Tk^D_{\textrm{hf}}(w)=\tau$. Assume $r_1, \ldots, r_n$ is the sequence of rules applied to produce $\tau$ according to Definition~\ref{defn:tokenization-hf} (i.e.\ using HuggingFace semantics). Then it must be the case that $r_1, \ldots, r_n$ are in order of decreasing priority.
\end{lemma}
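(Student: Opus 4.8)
The plan is to induct on the rounds of the HuggingFace derivation $\tau_0 \Rightarrow^+ \cdots \Rightarrow^+ \tau_n$ of Definition~\ref{defn:tokenization-hf}, where round $k$ selects the highest-priority applicable rule $r_k$ and applies it exhaustively. Writing $p_k$ for the index of $r_k$ in $D$ (so lower index means higher priority), the assertion that $r_1,\ldots,r_n$ are in order of decreasing priority is exactly $p_1 < p_2 < \cdots < p_n$, so it suffices to prove the local step $p_k < p_{k+1}$ for each $k$. I would reduce this to a single invariant: immediately after round $k$, i.e.\ in $\tau_k$, no rule of index $\le p_k$ is applicable. Granting the invariant, the rule $r_{k+1}$, chosen in $\tau_k$ as the highest-priority applicable rule, must have index strictly greater than $p_k$, which is precisely $p_k < p_{k+1}$.

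Establishing the invariant splits into two easy observations and one hard one. Since $r_k$ is chosen as the highest-priority applicable rule in $\tau_{k-1}$, every rule of index $<p_k$ is already inapplicable in $\tau_{k-1}$; and $r_k$ itself is inapplicable in $\tau_k$ because it is applied until it no longer can be. Thus the only thing left to exclude is that the exhaustive application of $r_k$ re-enables one of these strictly-higher-priority rules. This is the heart of the matter, and it is exactly where properness must be used, mirroring the contradiction in the proof of Lemma~\ref{lemma:proper-unifies-semantics}.

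For that key step I would argue as follows. Suppose applying $r_k = u \tok v$ makes some rule $r' = a \tok b$ of index $<p_k$ newly applicable. Because the merge only replaces adjacent tokens $u \tok v$ by a single token $uv$, any newly created adjacency must involve a freshly formed $uv$, forcing $uv = a$ or $uv = b$. In either case the corresponding component of $r'$ has length $|uv| \ge 2$, so the definition of a proper dictionary supplies a rule of still higher priority than $r'$ whose two tokens concatenate to $uv$. The aim is to convert the coexistence of this higher-priority decomposition of $uv$ with the fact that $uv$ was just produced by the lower-priority rule $r_k$ into a contradiction, thereby securing the invariant.

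The main obstacle is precisely this final contradiction. Properness guarantees only that \emph{some} higher-priority rule decomposes the token $uv$, not that $uv$ is split the way $r_k$ split it, and in principle a token may admit several decompositions inside $D$. I therefore expect to need the inductive hypothesis in an order-based, strengthened form, along the lines suggested by the surrounding discussion (apply $u_1\tok v_1$ exhaustively, then $u_2\tok v_2$, and so on): the higher-priority rule furnished by properness would already have had its opportunity to fire before $r_k$ was ever reached, so the adjacency enabling $r'$ cannot be genuinely new. Pinning down this ``would already have fired'' step, and in particular ruling out an alternative lower-priority formation of $uv$ slipping in first, is the delicate point; I would handle it carefully, using Lemma~\ref{lemma:robust-tok} to localize attention to the span occupied by $uv$ and its neighbours.
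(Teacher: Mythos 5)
Your plan is essentially the paper's own proof: take the first priority inversion, observe that the adjacency enabling the higher-priority rule must involve the token just merged by the lower-priority rule (since a merge creates new adjacencies only at the freshly formed token), and derive a contradiction with properness. The ``delicate point'' you flag --- that properness only guarantees \emph{some} higher-priority decomposition of the merged token, not necessarily the decomposition the lower-priority rule actually used --- is a genuine subtlety, but the paper's proof does not resolve it either; it simply asserts that the situation ``contradicts $D$ being a proper dictionary,'' so you are not missing any step that the paper supplies.
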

\begin{proof}
  We proceed in a way similar to Lemma~\ref{lemma:proper-unifies-semantics}. By contradiction, assume that there exists some $r_i$ such that $r_{i+1}$ is of higher priority than $r_i$. This means that the tokenization after the first $i$ steps contains the pair $r_{i+1}= u \tok v$, but this pair cannot have been created by the applications of $r_i$, as it is of lower priority than $r_{i+1}$ and $D$ is proper, and it also cannot have existed in the tokenization when $r_i$ was picked as the rule to next apply, as that contradicts how rules are picked in Definition~\ref{defn:tokenization-hf}. As such, our assumption was wrong, and $r_i$ is of higher priority or equal to $r_{i+1}$. 
\end{proof}

We will use the term \emph{refinement} for the way a tokenization is developed in this way, i.e.\ a \emph{tokenization $\tau$ is a refinement of $\tau'$}, if  $\tau'$ can be obtained from $\tau$ by applying $\pi$ to some of the subtokenizations in $\tau$. For example, $\phi\tok\phi'\tok\phi''$ is a refinement of $\phi\tok\pi(\phi')\tok\phi''$ and also of $\pi(\phi)\tok\pi(\phi')\tok\phi''$. We can also consider the opposite notion, i.e.\ a tokenization $\tau'$ is \emph{coarser} than $\tau$ , if $\tau$ is a refinement of $\tau'$. Thus, a (final) tokenization of a given string is obtained by using rules from $D$ to obtain coarser and coarser tokenizations.
Recall, $|D|$ denotes the number of rules in $D$.


\begin{theorem}
  \label{thm:left-ripple-bound}%
  Let $D$ be proper, $|\tau|\ge |D|$ and $\Tk^D(\pi(\phi)\pi(\tau))=\phi\tok \tau$. Then $\Tk^D(\pi(\phi \tok \tau) \pi(\psi)) = \phi \tok \psi'$, for some tokenization $\psi'$. That is, by Definition~\ref{defn:lookahead} we have $l(D)\le |D|\cdot \max \{|uv| \mid u\tok v \in D\}$.
\end{theorem}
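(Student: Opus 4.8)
The plan is to exploit the priority-order characterization of HuggingFace tokenization from Lemma~\ref{lemma:hf-priority-order}, which (via Lemma~\ref{lemma:proper-unifies-semantics}, since $D$ is proper) equally computes $\Tk^D$. Write $w=\pi(\phi)\pi(\tau)$ and $w'=w\,\pi(\psi)$, so that $w$ is a prefix of $w'$. For a string $x$ and $0\le j\le|D|$, let $H_j(x)$ denote the tokenization obtained from $\TkEmpty(x)$ by applying the rules $r_1,\ldots,r_j$ in this priority order, each as many times as possible, so that $H_0(x)=\TkEmpty(x)$ and $H_{|D|}(x)=\Tk^D(x)$. The whole argument is a controlled comparison of the two derivations $H_0(w)\Rightarrow^*\cdots\Rightarrow^* H_{|D|}(w)=\phi\tok\tau$ and $H_0(w')\Rightarrow^*\cdots\Rightarrow^* H_{|D|}(w')$, tracking how far leftward a ripple triggered by the appended block $\pi(\psi)$ can reach.

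First I would establish, by induction on $j$, the invariant that $H_j(w')$ agrees with $H_j(w)$ on all tokens except the last $j$ tokens of $H_j(w)$; that is, if $H_j(w)=a_1\tok\cdots\tok a_p$ then $H_j(w')=a_1\tok\cdots\tok a_{p-j}\tok\rho$ for some tokenization $\rho$ (the claim being vacuous when $p<j$). The base case $j=0$ is immediate, since $H_0(w')=\TkEmpty(w)\tok\TkEmpty(\pi(\psi))$ and the $\TkEmpty(w)$ prefix is exactly $H_0(w)$. For the step, I apply $r_j$ fully to both $H_{j-1}(w)$ and $H_{j-1}(w')$; as these agree on all but their last $j-1$ tokens, the left-to-right applications of $r_j$ inside this common block are forced to be identical in the two derivations, because each merge $u_j\tok v_j\to u_jv_j$ is determined by locally adjacent tokens. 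The only merge that can behave differently is the one involving the last common token together with the first token of the disagreeing suffix, and that suffix differs between the two derivations. Hence at most one further token of the common prefix can be drawn into the disagreeing region, and since $r_j$ only coarsens, the disagreeing region of $H_j(w)$ still consists of at most $j$ tokens, which is the invariant.

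Instantiating the invariant at $j=|D|$ yields that $\Tk^D(w')=H_{|D|}(w')$ agrees with $\Tk^D(w)=\phi\tok\tau$ on all but the last $|D|$ tokens. Because $|\tau|\ge|D|$, these last $|D|$ tokens lie entirely within $\tau$, so the prefix $\phi$ is left untouched and $\Tk^D(w')=\phi\tok\psi'$ for some $\psi'$, which is exactly the claim. The stated numerical bound then follows since each token has length at most $\max\{|uv|\mid u\tok v\in D\}$, so $|\pi(\tau)|\ge|D|\cdot\max\{|uv|\mid u\tok v\in D\}$ forces $|\tau|\ge|D|$, giving $l(D)\le|D|\cdot\max\{|uv|\mid u\tok v\in D\}$ by Definition~\ref{defn:lookahead}.

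I expect the delicate point to be the inductive step, specifically justifying rigorously that the full left-to-right application of $r_j$ to the shared prefix is genuinely identical in both derivations up to, and excluding, the seam token, and that a merge straddling the seam can capture at most one previously-safe token rather than cascading further left. This is where the single-rule structure of $r_j$ is essential: once the seam merge forms a token $u_jv_j$, rule $r_j$ cannot immediately reapply one position further left, as that would require $u_jv_j=v_j$, so the ripple advances by at most one token per priority level, which is the heart of the bound.
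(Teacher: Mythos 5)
Your proof is correct and follows essentially the same route as the paper's: both reduce to HuggingFace semantics via Lemma~\ref{lemma:proper-unifies-semantics}, evaluate the dictionary prefix-by-prefix in priority order (justified by Lemma~\ref{lemma:hf-priority-order}), and show that each successive rule can pull at most one additional token of the stable prefix into the disturbed region, so that after all $|D|$ levels at most the last $|D|$ tokens of $\phi\tok\tau$ are affected. The only cosmetic difference is that you compare the level-$j$ tokenizations of $w$ and $w'$ directly via the locality of greedy left-to-right matching, whereas the paper phrases its invariant as a refinement of the final tokenization $w_1\tok\cdots\tok w_{k-i}\tok\psi_i$; your observation that the merged token $u_jv_j$ cannot itself equal $v_j$ makes the one-token-per-level bound slightly more explicit than the paper's ``otherwise it would already have been applied'' argument.
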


\begin{proof}
This is easier to see using $\Tk^D_{\textrm{hf}}$, which is equivalent to $\Tk^D$ by Lemma~\ref{lemma:proper-unifies-semantics}.
Let $D$ be the dictionary $[u_1 \tok v_1, \ldots, u_n \tok v_n]$ and $D_i$ be the $i$-length prefix of $D$, i.e.\ $D_i=[u_1 \tok v_1,\ldots, u_i \tok v_i]$ for each $i$. We determine $\Tk^D_{\textrm{hf}}(w)$ as follows: First calculate $\Tk^{D_1}_{\textrm{hf}}(w)$, then assuming we know $\Tk^{D_i}_{\textrm{hf}}$, we apply $u_{i+1}\tok v_{i+1}$ to $\Tk^{D_i}_{\textrm{hf}}(w)$ wherever possible (working left to right) to obtain $\Tk^{D_{i+1}}_{\textrm{hf}}$. This procedure is correct by Lemma~\ref{lemma:hf-priority-order}.
We refer to $\Tk^{D_i}(w)$ as a tokenization at level $i$, and note that $\Tk^{D_i}(w)$ is a refinement of $\Tk^D(w)$.

Let $\phi\tok\tau = w_1\tok\ldots\tok w_k$ and $w=\pi(\phi \tok \tau) \pi(\psi)$. We show that the tokenization of $w$ at level $i$ is a refinement of a tokenization of the form $w_1\tok\ldots\tok w_{k-i}\tok \psi_i$, thus that $\Tk^{D_i}_{\textrm{hf}}(w)$ is a refinement of a tokenization of the form $w_1\tok\ldots \tok w_{k-i}\tok\psi_i$, for some tokenization $\psi_i$. 
More precisely, when we use \changeB{the dictionary $D_i$,} then tokenizing $\pi(\phi\tok\tau)\pi(\psi)$, \changeB{instead of only $\pi(\phi\tok\tau)$,} changes at most the rightmost $i$ tokens in $\phi\tok\tau$.
This implies that when $i=n$, we obtain that $\Tk^{D_n}_{\textrm{hf}}(w)$ is a tokenization of the form $w_1\tok\ldots\tok w_{k-n}\tok \psi_n$, i.e.\ not only a refinement of a tokenization of the given form.


First, we show that  $\Tk^{D_1}_{\textrm{hf}}(w)$ is a refinement of a tokenization of the form $w_1\tok\ldots \tok w_{k-1}\tok\psi_1$. Note that $u_1\tok v_1$ could \changeB{potentially be applied to the substring} $w_k\pi(\psi)$, when tokenizing $\pi(\phi\tok\tau)\pi(\psi)$, but $u_1\tok v_1$ is not applied across any of the $(k-1)$ \changeM{boundaries between tokens} in $w_1\tok\ldots\tok w_k$. This must be the case, otherwise $u_1 \tok v_1$ 
would have been applied \changeB{over some of these $(k-1)$ boundaries between the tokens $w_i$, for $1\le i\le k$}, when computing the tokenization $w_1 \tok \ldots \tok w_k$ with the full dictionary $D$. 

Moving on to the next rule in terms of priority, $u_2\tok v_2$, we repeat the argument we used for $u_1\tok v_1$. More precisely, $u_2\tok v_2$ could potentially be applied, one or more times, to the substring $w_{k-1}\pi(\psi_1)$, when tokenizing $\pi(w_1\tok\ldots\tok w_{k-1}\tok\psi_1)$, but $u_2\tok v_2$ is not applied across any of the $(k-2)$ tokenization boundaries in $w_1\tok\ldots\tok w_{k-1}$, otherwise it would have when $\pi(\phi\tok\tau)$ was tokenized using $D$. Thus, $\Tk^{D_2}_{\textrm{hf}}(w)$ is a refinement of a tokenization of the form $w_1\tok\ldots \tok w_{k-2}\tok\psi_2$.

We iterate this procedure for $i=3,\ldots,n$ to obtain the theorem. 
\end{proof}

\begin{example}
In this example, we consider the bound $|\tau|\ge|D|$, in the previous corollary. Fix a positive integer $n$ and let $D$ be a dictionary with the following $n$ rules:\changeB{$$a_n\tok a_{n+1},\ a_{n-1}\tok a_na_{n+1},\ a_{n-2}\tok a_{n-1}a_na_{n+1},\ \ldots,\ a_1\tok a_2\ldots a_{n+1}$$}
Also, let $\phi=a_0$, $\tau = a_1\tok a_2\tok\ldots\tok a_n$, and let $\psi=a_{n+1}$. Then, $\Tk^D(\pi(\phi\tok \tau)\pi(\psi))=a_0\tok a_1a_2\ldots a_{n+1}$, and we can not move any prefix of the tokenization of $\tau$ to $\phi$, otherwise we no longer have $\Tk^D(\pi(\phi \tok \tau) \pi(\psi)) = \phi \tok \psi'$, for some tokenization $\psi'$.
%
%
\eqed
\end{example}


Theorem~\ref{thm:left-ripple-bound} may at first appear quite abstract, but they demonstrate a fact that is very useful in practice: a finite lookahead is sufficient to tokenize a string from left to right.
\begin{definition}
  The \emph{sufficient lookahead} of a proper dictionary $D$ is $|D|\cdot\max\{|uv| \mid u\tok v \in D\}$. Observe that by Theorem~\ref{thm:left-ripple-bound} the sufficient lookahead is greater than or equal to $l(D)$.
\end{definition}
That is, for a proper dictionary $D$ and a string $w$, if we know that $\phi$ is a prefix of $\Tk^D(w)$ (beginning the process by taking $|\phi|=0$) we can compute a prefix $\phi\tok u$ by inspecting only the sufficient lookahead many next symbols. Observe that this lookahead length does not depend on $|w|$. This has potential to improve tokenization performance by cache locality (where e.g.~SentencePiece~\cite{sentencepiece} and HuggingFace~\cite{huggingface-gpt-2-py} access the string contents with random access), but also enables doing streaming tokenization using a constant amount of memory, for when the entire string is not available or impractical to hold in memory. One way to express this finite state tokenization approach is as a deterministic string-to-string transducer. First, to avoid special cases for the end of the string, let us define a simple normal form.
\begin{definition}
  For a dictionary $D$ over the alphabet $\sig$, let $k$ be the sufficient lookahead for $D$, assume $z\notin \sig$, then a string $v \in (\sig\cup\{z\})^*$ is the \emph{end-padding of $w$} if it is of the form $wzz\cdots z$ where there are a total of $k$ trailing $z$s.
\end{definition}
\begin{remark}
  Observe that if $v$ is $w$ end-padded, then we have $\Tk^D(v)=\Tk^D(w) \tok \phi$ where $\phi=z\tok \cdots \tok z$, since $D$ contains no rules involving $z$. This means that tokenizing $v$ from the left to right we can end the procedure the moment the lookahead consists only of $z$s, as that is the padding which will always tokenize to $\phi=z\tok \cdots \tok z$.
  \eqed
\end{remark}
This allows us to state a straightforward left-to-right tokenization algorithm without having special cases for when the 
\begin{algorithm}
  \label{algo:finite-state}%
  Let $D$ be a proper dictionary over the alphabet $\sig$ and $k$ its sufficient lookahead. Assume that $z\notin \sig$.

  Precompute $f : (\sig\cup\{z\})^k \to \sig^{\tok}$ such that for all $w\in \sig\cup\{z\}$ we have $f(w)=u$ where $\Tk^D(w)=u \tok \tau$ for some $\tau$. Observe that then $\Tk^D(ww')=u \tok \tau'$ for all $w'$ as well, by Theorem~\ref{thm:left-ripple-bound}.

  Then for any string $w$ let $v$ be its end-padding, we can then compute $\Tk^D(v)$ using the following steps.
  \begin{enumerate}
    \item Split $v=v'v''$ such that $|v'|=k$.
    \item If $v'=z\cdots z$, halt.
    \item Lookup $f(v')=u$, output $u$.
    \item Split $v=uu'$ (observe that $u$ must be a prefix of $v'$ and in turn $v$).
    \item Update $v$ to be $u'$, then go to 1.
  \end{enumerate}
\end{algorithm}

\begin{theorem}
  For any fixed proper $D$ and any string $w$, Algorithm~\ref{algo:finite-state} outputs $\Tk^D(w)$ in time $\mathcal{O}(|w|)$ and using $\mathcal{O}(1)$ taking $D$ to be fixed and the input string to be read only.
\end{theorem}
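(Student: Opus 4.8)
The plan is to prove the statement in two independent pieces: correctness (the emitted sequence of tokens is exactly $\Tk^D(w)$) and the resource bounds. Since $D$ is proper I would use $\Tk^D=\Tk^D_{\textrm{hf}}$ throughout (Lemma~\ref{lemma:proper-unifies-semantics}). The core of correctness is that the table lookup $f(v')$ commits, once and for all, to the correct first token. By construction $f(v')=u$ where $\Tk^D(v')=u\tok\tau$ for the length-$k$ window $v'$, and Theorem~\ref{thm:left-ripple-bound} guarantees this $u$ is stable under right-extension: appending any suffix to $v'$ cannot change it, so $u$ is genuinely the first token of $\Tk^D(v)$ for the whole current padded suffix $v$ (of which $v'$ is a prefix). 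Having committed to $u$, I would peel it off: by Remark~\ref{rem:parts-of-toks} (a suffix of a correct tokenization is the correct tokenization of the string it spells) we get $\Tk^D(v)=u\tok\Tk^D(u')$, where $v=uu'$. This is precisely the transition the loop performs, so an induction on $|v|$ shows that the concatenation of the emitted tokens is a prefix of $\Tk^D(v)$ that is extended by exactly one correct token per iteration.

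Next I would deal with the padding and with termination. Because $D$ contains no rule mentioning $z$, the remark following the end-padding definition gives $\Tk^D(v)=\Tk^D(w)\tok z\tok\cdots\tok z$; in particular $z$ is always its own token, so no emitted token ever straddles the boundary between $w$ and its padding. The loop halts exactly when the window equals $z\cdots z$, and since all padding symbols occupy the right end, this occurs if and only if every symbol of $w$ has been consumed and only the $k$ trailing $z$'s remain. Two small facts make this precise and I would verify them explicitly: first, the window always has its full length $k$ before halting, since the padding supplies $k$ trailing symbols and $|v|=k$ holds at the halting step while $|v|>k$ holds strictly before it; second, every token emitted before halting is a genuine substring of $w$, since $v'\ne z\cdots z$ forces $v'$ to begin with a symbol of $\sig$ and hence makes $f(v')$ a $z$-free token. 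Combining these with the induction shows the emitted sequence is exactly $\Tk^D(w)$, with the padding discarded at the halt.

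For the resource bounds I would argue as follows. The table $f$ is indexed by the $(|\sig|+1)^{k}$ strings of length $k$ over $\sig\cup\{z\}$, and each entry is filled by one tokenization of a length-$k$ string; as $k$ depends only on $D$ and $D$ is fixed, this precomputation is $\mathcal{O}(1)$ in $|w|$ and belongs to the fixed data associated with $D$. The loop itself keeps only an index into the read-only input (positions past $|w|$ being simulated as $z$, so the padding is never materialized) together with the constant-length window $v'$. Each iteration forms $v'$, does one $\mathcal{O}(1)$ lookup, emits one token, and advances the index by $|u|\ge 1$; hence each iteration is $\mathcal{O}(1)$ and there are at most $|w|+1$ of them, for total time $\mathcal{O}(|w|)$. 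The only mutable state is the index and the fixed-size window, so the working memory is $\mathcal{O}(1)$ once $D$ is fixed and the input is taken to be read-only.

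The step I expect to require the most care is the stability of $f$: one must be certain that a window of exactly $k=|D|\cdot\max\{|uv|\mid u\tok v\in D\}$ symbols suffices to commit irrevocably to the first token. This is exactly the role of Theorem~\ref{thm:left-ripple-bound} and Definition~\ref{defn:lookahead}, but the accounting that converts the token-count hypothesis $|\tau|\ge|D|$ of that theorem into a bound on the number of input \emph{symbols} one must inspect is the delicate point, and is where I would take pains to match the chosen window length $k$ to the hypothesis. Everything else---the peel-off step, the padding bookkeeping, and the constant per-iteration cost---is routine once this stability is secured.
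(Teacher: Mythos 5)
Your argument follows the same route as the paper's own (two-sentence) proof: correctness comes from Theorem~\ref{thm:left-ripple-bound} guaranteeing that the first token of the length-$k$ window is stable under arbitrary right-extension, and the resource bounds from the constant-size table plus constant work per iteration; you merely supply the peel-off induction, the padding bookkeeping, and the table-size count that the paper leaves implicit. The one delicate point you flag---converting the token-count hypothesis $|\tau|\ge|D|$ of that theorem into a symbol count for the window after its first token $u$ has been spent, which naively yields only $|\tau|\ge|D|-1$ remaining tokens---is a genuine off-by-one worth pinning down, but the paper's proof glosses over it as well, so your proposal is no weaker than the original.
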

\begin{proof}
  Correctness follows from Theorem~\ref{thm:left-ripple-bound}, with the algorithm picking tokens based on a long enough prefix of the string (guaranteed to exceed $l(D)$) that it is guaranteed that any suffixes will still have the correct tokenization produce that same token.

  The time and space bounds are trivial noting that each iteration step uses up part of the string, and suitably implemented each step uses an amount of time and space bounded in $l(D)$, which is $\mathcal{O}(1)$. Reusing space, the indicated bounds are reached.
\end{proof}

A perhaps more natural presentation of this algorithm would be constructing a string-to-string transducer. This is not very complicated, the transducer would read $l(D)$ symbols and then, in much the same way as the precomputed $f$ in Algorithm~\ref{algo:finite-state}, output a token accordingly. However, the bound here provided is quite large, and a more sophisticated construction is likely needed.

The bound established by Theorem~\ref{thm:left-ripple-bound} is clearly quite loose for most realistic dictionaries. Basically, it assumes that the rules in the dictionary create a chain, where each successive rule can ``interfere'' with the application of the next lower priority rule.
\begin{remark}\label{rem:improved-lookahead}
  We have from Theorem~\ref{thm:left-ripple-bound} that $l(D)\le |D|\cdot\max\{|uv|\mid u\tok v\in D\}$, where $l(D)$ is the lookahead constant of $D$, since $|\pi(\tau)| \ge |D|\cdot\max\{|uv|\mid u\tok v\in D\}$, implies $|\tau|\ge |D|$.
  Next, we consider (informally) how to improve the bound $|\tau|\ge|D|$, which will then improve the bound on the lookahead constant. Consider dictionaries $D=[a\tok b, c\tok d, ab\tok cd]$ and $D'=[c\tok d, a\tok b, ab\tok cd]$, i.e.\ $D$ and $D'$ have the same rules, but we switched the order of the first two rules in these dictionaries.  Now,  note that the tokenizations of any string $w$ will be the same, independent of if we use $D$ or $D'$. We say that dictionaries $D$ and $D'$ are \emph{equivalent} if $\Tk^D(w)=\Tk^{D'}(w)$ for all strings $w\in\Sigma^*$. The complexity of deciding if two dictionaries are equivalent, and if equivalence is even  decidable, is left for future research.

  Next, we define the \emph{chain length} of a dictionary $D$, denoted as $c(D)$. For a dictionary $D$, we let $c(D)$ be the maximum  value of $n$ such that we have a sequence of rules of decreasing priority, $r_1,\ldots,r_n$, in all dictionaries equivalent to $D$. 
  Thus, we certainly have that $c(D)\le |D|$. Again, the complexity of computing $c(D)$ will be considered in a future publication, but once we have some, but necessary all dictionaries equivalent to $D$, we can obtain an upper bound for $c(D)$, most likely better than $|D|$. In particular, if $r=u\tok v$ and $r'=u'\tok v'$ are neighbouring rules in $D$, such that a non-empty suffix of $uv$ is not a prefix of $u'v'$, a non-empty prefix of $uv$ is not a suffix of $u'v'$, and $uv$ is not a substring of $u'v'$, and we have similar conditions when swapping $r$ and $r'$, then certainly we can switch $r$ and $r'$ in $D$ and this will not change the tokenization of any string.
  For example, if $D=[a\tok b, c\tok d, ab\tok cd]$, then $c(D)\le 2$, since $[a\tok b, c\tok d, ab\tok cd]$ and $[c\tok d, a\tok b, ab\tok cd]$ are equivalent dictionaries.

  Finally, note that with these concepts in hand, the proof of Theorem~\ref{thm:left-ripple-bound} actually shows a stronger result: the theorem holds if we replace the bound $|\tau|\le |D|$ by $|\tau|\le c(D)$. 
  This can be seen by noting that the proof of Theorem~\ref{thm:left-ripple-bound} implies that there is a sequence of rules $r_1,\ldots,r_n$, such that if we apply these rules in order, as in HuggingFace semantics, then we obtain a refinement of $w_1\tok\ldots\tok w_{k-i}\tok\psi_i$ after having applied $r_i$. But, independently of which dictionary equivalent to $D$ is used, $r_1$ is only potentially applied over the tokenization boundary between  $w_k$ and $\psi$, and not over any of the $(n-1)$ tokenization boundaries between any of the $w_i$. Similarly, none of the $r_i$, with $i\ge 2$, is applied over any of the $(k-i-1)$ tokenization boundaries in $w_1\tok\ldots\tok w_{k-i}$. This is the case, since with equivalent dictionaries, by definition, we obtain the same tokenization, and as shown in the proof of Theorem~\ref{thm:left-ripple-bound}, in order to cross the next tokenization boundary from the right, we need a lower priority rule. In summary, each $r_i$ from $r_1\ldots r_n$, will, in order, cross at most one more tokenization boundary, from the right, in $w_1\tok\ldots\tok w_{k}\tok\psi_i$, and each $r_i$ has lower priority than $r_{i-1}$, independent of which dictionary equivalent to $D$ is considered. 
  We thus have that $l(D)\le c(D)\cdot\max\{|uv|\mid u\tok v\in D\}$.

It is necessary to obtain $c(D)$ through a more efficient procedure than pure enumeration. 
Note, the enumeration involved in constructing $f$ in Algorithm~\ref{algo:finite-state} inefficiently ``finds'' the true $l(D)$ anyway.
  \eqed
\end{remark}

\section{Conclusions and Future Work}

In some ways, the main contribution of this paper is the more formal definition of the tokenization semantics, allowing them to be studied in closer details. We leveraged this to establish some interesting properties of tokenizations, and established algorithms for both incrementally modifying a tokenization and for doing tokenizations left-to-right using space constant in the length of the string.

Much future work remains, including the following.
\begin{itemize}
  \item Experimental studies should be performed, for example, testing how the incremental algorithm behaves in random cases. It seems likely to be extremely efficient in practice, as long chains of changes, or infinite ones such as in Example~\ref{ex:infinite-ripple}, do not seem to be very common or realistic. \changeM{We do not offer implementation details in this paper, as in \emph{general} the bounds implied by the constructions are high enough to be impractical.}
  \item Determining better upper bounds for the lookahead constant. Some more aspects that will be considered, are listed throughout the paper, and in particular in Remark~\ref{rem:improved-lookahead}. \changeM{Once such bounds are established practical implementation details can be considered.}
  \item Beyond improving the bounds of Theorem~\ref{thm:left-ripple-bound} there is also its converse, determining how much a tokenization may change from appending strings on the \emph{left.} Example~\ref{ex:infinite-ripple} already demonstrates that this is not finite for all $D$, but from random testing it seems to \emph{often} be finite. It should be investigated whether the dictionaries exhibiting these infinite ripples of changes do so due to some easily decidable property, and whether the ``lookbehind'' (compare Definition~\ref{defn:lookahead}) is efficiently computable when finite.
\end{itemize}


\bibliographystyle{eptcs}
\bibliography{main}

\end{document}